\newtheorem{claim}{Claim}[section]
\begin{document}
\setlength{\textfloatsep}{10pt plus 1.0pt minus 2.0pt}
\preprint{APS/123-QED}

\title{Solving a directed percolation inverse problem}

\author{Sean Deyo}
 \email{sjd257@cornell.edu}
\affiliation{Cornell University\\
    Ithaca, NY}

\date{\today}

\begin{abstract}
We present a directed percolation inverse problem for diode networks: Given information about which pairs of nodes allow current to percolate from one to the other, can one find a configuration of diodes consistent with the observed currents? We implement a divide-and-concur iterative projection method for solving the problem and demonstrate the supremacy of our method over an exhaustive approach for nontrivial instances of the problem. We find that the problem is most difficult when some but not all of the percolation data are hidden, and that the most difficult networks to reconstruct generally are those for which the currents are most sensitive to the addition or removal of a single diode.
\end{abstract}

%\keywords{Suggested keywords}

\maketitle

%\tableofcontents

\section{Introduction}\label{sec:intro}
%Percolation problems have long been a topic of interest in a variety of disciplines, including the chemistry of polymers \cite{flory1941polymer,stockmayer1944molecular}, the ecology of habitat fragmentation \cite{boswell1998habitat}, and the dynamics of commuter traffic networks \cite{li2015traffic}. The archetypal percolation problem is the flow of liquid through a porous medium. One can model the medium as a network in which the interstices are nodes and two interstices share an edge if they are near one another. Each edge can be ``open'' or ``closed'' depending on whether liquid can flow from one node to the other. The task is to ascertain whether the liquid has a path from one side of the medium to the other \cite{broadbent1957percolation}.

%If the edges are opened randomly with probability $p$, one can argue that in an infinite medium there is a critical $p$ below which there is never a path and above which there is always a path. A significant amount of research has focused on analytical \cite{bollobas2006thresholds} and numerical \cite{newman200monte} techniques for evaluating the critical $p$ and exploring the behavior of networks near the critical point \cite{stauffer2018introduction}.

%One can also introduce a bias that skews current in one direction more than the other \cite{blease1977series}. Many authors take one dimension of a multidimensional lattice to represent time and only allow flow forward in time \cite{grassberger1989directed}. This makes it possible to map the problem onto the Ising model or cellular automata \cite{domany1984equivalence}.

{ Directed percolation (DP) is a model of spreading processes with a directional bias, such as the flow of fluid downward through a porous medium, the spread of a forest fire under the influence of wind, or the growth of a polymer in a flowing liquid~\cite{obukhov1980problem}. It can also apply to more abstract processes like phase transitions in liquid crystals~\cite{takeuchi2007liquid}, interface pinning~\cite{tang1992pinning}, and the proliferation of turbulence in hydrodynamics~\cite{pomeau1986hydrodynamics}. 

%DP becomes even more powerful if one interprets the flow direction as the time dimension, as it shares universal critical behavior with Reggeon field theory~\cite{cardy1980reggeon,grassberger1989directed}. 
If one interprets the flow direction as the time dimension, DP becomes an elegant formulation for the time evolution of the Ising model or a cellular automaton~\cite{domany1984equivalence}. This interpretation comes with a natural inverse problem: If one observes the state of a cellular automaton at two different times, one can attempt to reconstruct the rules that govern the system's evolution~\cite{springer2021gameoflife,elser2021reconstructing}. Though this interpretation involves a regular and repeating network such as a square lattice, it is possible to study percolation in random~\cite{fan2012random,lee2018complex} or clustered networks~\cite{colomer2014double,miller2009cluster} that are more apt to model social networks or the spread of disease. It is also possible to make percolation locally directed without necessarily favoring one particular direction globally, by randomly installing diode-like edges that allow flow one way but not the other~\cite{broadbent1957percolation}. One can then calculate properties such as the conductivity or percolation threshold of lattices with such edges~\cite{redner1982directed,redner1982conductivity,verbavatz2021oneway,denoronha2018isotropic}.

%The problem of reconstructing such networks has received some attention in recent years~\cite{peixoto2019community,angulo2017fundamental}. Like the cellular automaton problem, the reconstruction of these networks is often based on data in the form of a time series~\cite{runge2018causal}.

%The focus of this paper is closer to the cellular automaton interpretation than to the classic statistical study of DP. Indeed, the ``divide-and-concur'' method, which we employ in our algorithm, has been used to reconstruct the rules of a cellular automaton based on time series data~\cite{elser2021reconstructing}. 
}

\begin{figure}[t]
    \centering
    \includegraphics[width=.45\textwidth]{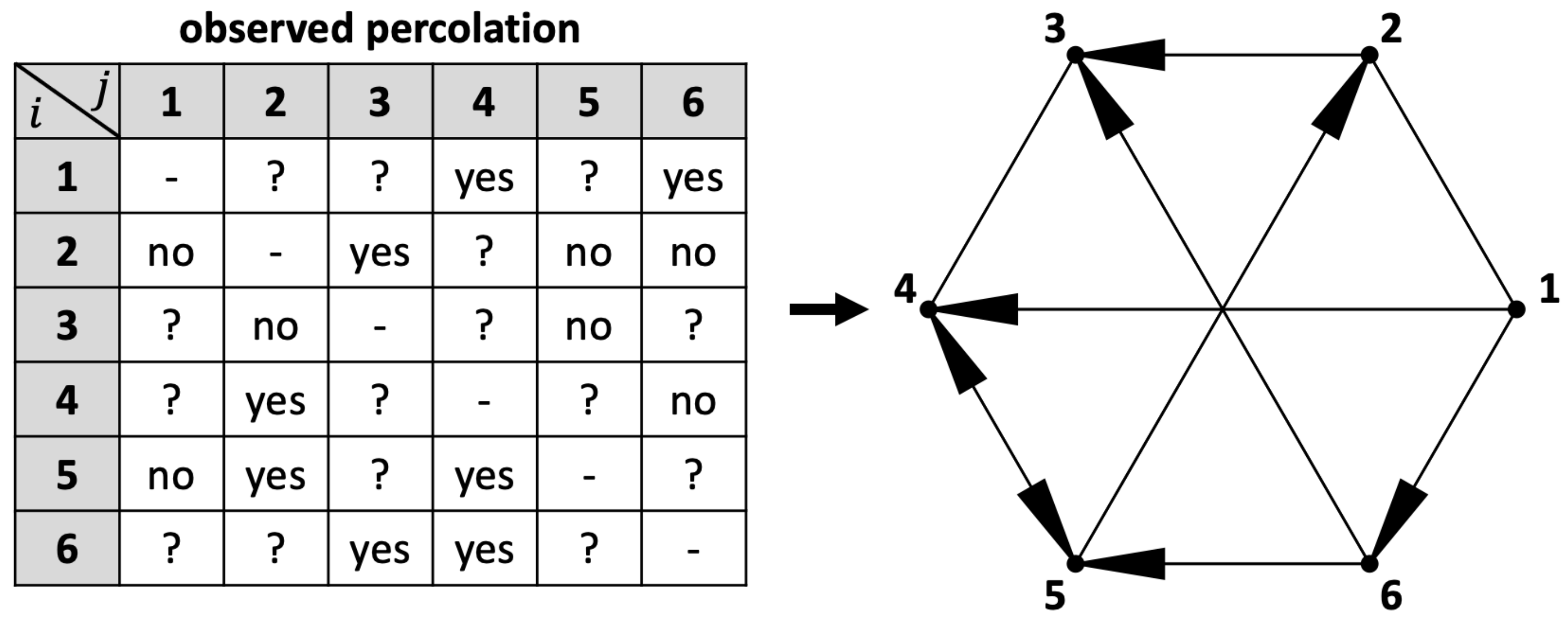}
    \caption{The inverse percolation problem. Given data (left) about whether current can percolate from $i$ to $j$, the task is to find a network (right) consistent with the observations. One does not necessarily have current data for every $(i,j)$; in fact, the problem is nontrivial only if some of the data are hidden, which we denote here with question marks.}
    \label{fig:prob}
\end{figure}

The problem we study in this paper uses diodes as the source of directionality, but not on a lattice. Imagine an electrical circuit consisting solely of nodes, wires, and ideal diodes. Each wire can have
\begin{itemize}
    \item zero diodes, so that current can flow freely in either direction,
    \item one diode, so that current can flow in one direction but not the other, or
    \item two diodes back-to-back, so that current cannot flow in either direction.
\end{itemize}
Suppose we know which nodes share a wire but we do not know the location or number of diodes. We can apply a voltage across a pair of nodes $(i,j)$ and record whether current percolates from $i$ to $j$. We can do this for several (not necessarily all) pairs of nodes, as in Figure \ref{fig:prob}. Our task is to find a configuration of diodes consistent with these measurements. Many configurations could yield the same data, so we cannot ask for \textit{the} configuration that generated the data but merely \textit{a} configuration consistent with the data. Though we describe it in terms of diodes, this problem can also serve as a model for other systems in which flow is locally asymmetric but occurs in more than one direction globally, such as the spread of information in a social network or the movement of goods or people in a transportation network. 

{ We can pose the problem formally in terms of the adjacency matrix $M$: Let $M_{ij}=1$ if nodes $i$ and $j$ share an edge and that edge allows current from $i$ to $j$; otherwise, let $M_{ij}=0$. Then $\left[M^\ell\right]_{ij}$ gives the number of paths of length $\ell$ from $i$ to $j$. Asking if current percolates from $i$ to $j$ is equivalent to asking whether $\left[\exp(M)\right]_{ij}>0$. Thus, if we know that some elements of $\exp(M)$ are positive, some are zero, and others are completely unknown, our task is to find an $M$ consistent with these measurements.

Ours is an inverse problem, like the reconstruction of a cellular automaton, but without the spatial regularity of a lattice or the temporal ordering of time series data. The fact that we are inferring network information from percolation information, rather than the other way around, is perhaps our most salient departure from the canon of percolation problems. This fact makes our problem one of network reconstruction. Such problems are particularly relevant when observational data are available but performing a direct experiment is not possible for logistical or ethical reasons~\cite{runge2018causal}, as in the spread of an infectious disease. We diverge from most other network reconstruction methods in the discreteness and exactness of the solution we seek. Others might search for continuous parameters representing the strength of the interaction between two nodes~\cite{sontag2008steady} or the concentration of a substance at one node~\cite{angulo2017fundamental}, or they might seek to winnow down the space of possible networks probabilistically to identify the most likely candidates~\cite{peixoto2019community}. These methods are appropriate when the network is large, data are relatively plentiful, and probabilistic or continuous descriptions of the network are desirable. We instead seek solutions that definitively specify whether each edge is open and reproduce the observations exactly. These rigid constraints make our method most suitable for reconstructing small networks from scant or incomplete data.

It is possible to convert our problem into a Boolean satisfiability (SAT) problem. Appendix \ref{app:SAT} describes how to do so. The SAT problem was the first to be proven NP-complete~\cite{cook1971complexity}, and many other problems---sudoku~\cite{lynce2006sudoku}, cryptography~\cite{massacci2000cryptanalysis}, cellular automaton reconstruction~\cite{elser2021reconstructing}, the $n$-queens problem and others~\cite{bright2019effectiveSAT}---can be posed as SAT problems, making it a useful way to generically evaluate the difficulty of a logical constraint problem. There are several methods~\cite{molnar2018maxsat,ercsey2011sat,sorensson2005minisat,biere2013lingeling,gravel2008divide} available for solving SAT problems. Mapping our problem into SAT makes it possible in principle to tackle our problem with these methods; however, the number of variables and clauses involved in the corresponding SAT problem scales with the number of self-avoiding paths in the network, which can grow quite quickly with the number of nodes $n$ in the network (see appendix \ref{app:SAT}). In a complete network, for example, the number of self-avoiding walks is more than $n!$. In the method we present in Section \ref{sec:algo}, the number of variables scales with the number of edges times the number of cells in the data table, a product which is less than $n^4$.
}

\section{Algorithm}\label{sec:algo}
%To give a solution we must specify where there are diodes. The number of possible diodes corresponds to the number of directed edges in the network: If directed edge $i\to j$ has a diode, then current cannot flow from $i$ to $j$ along the wire that $i$ and $j$ share (though it could flow along some other path from $i$ to $j$). For every directed edge $e$ we create a diode variable $x_e$. If $x_e=1$, there is a diode blocking $e$. If $x_e=0$, there is no diode.

%The divide-and-concur method breaks the difficult task of finding a set of $x_e$ consistent with all observations into a collection of easy tasks: For each ordered pair of nodes $(i,j)$, find a set of $x_e$ consistent with cell $ij$ of the data table:
%\begin{itemize}
%    \item If current percolates from $i$ to $j$, there must be an unblocked path from $i$ to $j$; that is, there must be at least one path $w$ with $x_e=0$ for every edge $e$ along $w$.
%    \item If current does not percolate from $i$ to $j$, all paths from $i$ to $j$ must be blocked; that is, every path $w$ must have at least one $e$ such that $x_e=1$.
%    \item If there is no measurement for $(i,j)$, then every $x_e$ can be either $0$ or $1$.
%\end{itemize}
%There are $n(n-1)$ pairs to consider, where $n$ is the number of nodes, so this yields $n(n-1)$ copies of $x_e$ for each $e$. Let $x_{ep}$ denote the copy of $x_e$ for pair $p$, and let $x$ denote the vector containing all of the $x_{ep}$'s.

{ The divide-and-concur approach~\cite{gravel2008divide} divides a difficult problem with many constraints into a set of easy problems, finds a solution to each easy problem, then tries to make the solutions concur with each other. To apply this approach to our problem, we break the difficult task of finding a configuration of diodes consistent with all observations into a collection of much simpler tasks: For each ordered pair of nodes $p=(i,j)$, find a configuration of diodes consistent with cell $ij$ of the data table. There are $n(n-1)$ pairs to consider, where $n$ is the number of nodes, so this yields $n(n-1)$ configurations. If all of these configurations are the same, then we have solved the problem. If not, we must find a new set of configurations and iterate until they all concur.

Our algorithm formulates this task as a search for a point in the intersection of two sets in a high-dimensional Euclidean space. Let $E$ be the set of directed edges in the network. For each pair $p$ we create a set of variables $x_p=\{x_{ep}\,|\,e\in E\}$. A setting of $x_{ep}=1$ means there is a diode blocking directed edge $e$ in the configuration generated for pair $p$, and $x_{ep}=0$ means there is no such diode, but during the search process we will allow $x_{ep}$ to be any real number. We use the shorthand $x$ to denote the vector containing all of the $x_{ep}$'s.

Let $A$ be the set of $x$ such that all $x_{ep}$ are $0$ or $1$, and for each $p=(i,j)$ the diode configuration encoded by $x_p$ is consistent with cell $ij$ of the data table. The projection of $x$ to set $A$ is the point $x^A\in A$ such that the distance
\begin{equation}
    d(x,x^A)=\sum_{e,p} \left(x_{ep}-x^A_{ep}\right)^2
\end{equation}
is as small as possible. Projecting an arbitrary $x$ to the nearest point in $A$ can be computationally expensive, so we use a quasi-projection: It always gives a point in set $A$, but not necessarily the distance-minimizing point. 

The quasi-projection is implemented as follows. We know we must set all $x^A_{ep}$ to either $0$ or $1$. So our first step is to set
\begin{equation}
    x^A_{ep}=\begin{cases} 0 & x_{ep}<0.5\\
    1 & x_{ep}\ge0.5\\ \end{cases}
    \label{eq:Around}
\end{equation}
for all $e$ and $p$. If all we had to do was set the variables to $0$ or $1$, then we would be done here and this would be an exact projection. However, ensuring that each $x^A_p$ agrees with the data will involve changing some $0$'s to $1$'s and vice versa. The squared distance for setting $x^A_{ep}=0$ is $$\left(x_{ep}\right)^2$$ and for $x^A_{ep}=1$ it is $$\left(1-x_{ep}\right)^2=1-2x_{ep}+\left(x_{ep}\right)^2,$$ so $1-2x_{ep}$ is the ``extra distance'' for changing $x^A_{ep}$ from $0$ to $1$, and $2x_{ep}-1$ is the extra distance for changing a $1$ to a $0$.

Now, to ensure $x^A_p$ agrees with the data for $p=(i,j)$:
\begin{itemize} 
    \item If there are no data for $p$, do nothing.
    \item If the data indicate current percolates from $i$ to $j$, check if there is at least one path from $i$ to $j$ with $x^A_{ep}=0$ along every directed edge $e$ of the path. If there is no such path, rank the paths by the extra distance for opening the path: For a given edge $e$, if $x_{ep}\geq0.5$ then we have $x^A_{ep}=1$ and we incur an extra distance of $2x_{ep}-1$ to change $x^A_{ep}$ to $0$. If $x_{ep}<0.5$ then we already have $x^A_{ep}=0$ and there is no extra distance. Thus, we can write the total extra distance for the path as
    \begin{equation*}
        \sum_{e\in\text{path}}
    \min(2x_{ep}-1,0).
    \end{equation*} For the path with the lowest sum, set $x^A_{ep}=0$ for every $e$ on the path.
    \item If the data indicate current does not percolate from $i$ to $j$, then for every path from $i$ to $j$, check if at least one edge $e$ along the path has $x^A_{ep}=1$. For any path for which this is not so, choose the $e$ on that path with the largest $x_{ep}$ (i.e., smallest $1-2x_{ep}$) and set $x^A_{ep}=1$.
\end{itemize}
This last point, the method for blocking current, is what makes our projection a quasi-projection, as it is not strictly distance-minimizing. See Appendix \ref{app:projA} for an illustrative example.
}

%We formulate our algorithm in terms of two constraint sets. Set $A$ consists of all $x$ such that for each $p=(i,j)$, the set of $x_{ep}$ is consistent with cell $ij$ of the data table. Projecting an arbitrary $x$ to the nearest point in $A$ can be computationally expensive. We use a quasi-projection, detailed in Appendix \ref{app:projA}, which ``projects'' $x$ to a point that is in $A$ but is not necessarily the closest such point. Let $x^A$ denote this quasi-projection of $x$ to $A$.

Projecting to $A$ solves all of the easy problems; what remains is to make these solutions concur. For this we define $B$ as the set of $x$ such that for each $e$, all $x_{ep}$ are equal. (This does \textit{not} require that $x_{ep}=0$ or $1$.) Projecting to $B$ is a simple average:
    \begin{equation}
        x^B_{ep}=\frac{1}{n(n-1)}\sum_{p'} x_{ep'}.
    \end{equation}
Any $x\in A\cap B$ is a solution to our problem. 

We initialize $x$ with random real numbers between $0$ and $1$, then iterate using a generalized Douglas-Rachford method~\cite{aragon2020douglas}:
\begin{equation}
    x \to x + \beta\, \frac{R_B(R_A(x))-x}{2}
    \label{eq:iter}
\end{equation} where $R_A(x)=2x^A-x$ is the reflection of $x$ across $x^A$, and likewise for $R_B$. { Using reflections, rather than just projections, is critical to allow the algorithm to escape from non-solution traps where $A$ and $B$ come close but do not intersect~\cite{aragon2020douglas}.} One can check that if $x$ is a fixed point of \eqref{eq:iter}, then $x^A \in A\cap B$. The parameter $\beta$ controls the size of the iteration steps. Taking ${\beta\to0}$ gives a reliable, continuous trajectory but makes the search process slow. Larger $\beta$ ($\sim 1$) can make the search quicker but sometimes leads the algorithm to get stuck in limit cycles. In this paper we use a conservative value, $\beta=0.1$, so that the algorithm avoids this limit-cycle trapping in all but a few percent of trials.

{Finally, we note that there are other realizations of the divide-and-concur principle for this problem. Having briefly tested some alternatives, we can say that the choice of implementation can affect the algorithm's performance but for the most part does not affect the major patterns we will describe in Section \ref{sec:exp}.
\begin{itemize}
    \item One speed-up is to have $x_p$ only for the pairs $p$ for which the percolation data are known. If some of the data are hidden, this mean fewer pairs to handle in the $A$ projection, which makes each iteration faster. However, we find that it can also make the algorithm more likely to get trapped in limit cycles, particularly when only two or three pairs in the data table are known.
    \item One can also take advantage of the percolation data for ordered pairs that happen to share an edge: For every directed edge $e=i\to j$, if we observe no percolation from $i$ to $j$, we can permanently set $x_{ep}=1$ for all $p$. If we do observe percolation from $i$ to $j$, we can safely set $x_{ep}=0$ for all $p$. By not using these shortcuts in our implementation we are effectively asking the algorithm to be clever enough to discover them on its own.
    \item If speed is no concern but the exactness of the projection is, one can create variables $x_{ew}$ for every path $w$, not just every pair. This makes our approach of blocking path by path an exact projection, but the enormous number of variables makes the algorithm much slower.
\end{itemize}
This is not an exhaustive list of alternative implementations, but it illustrates the diversity of approaches to solving a problem in the divide-and-concur framework. The framework is also quite adaptable to extra global constraints. For example, if one wanted to impose a maximum or minimum number of diodes, one could add an extra step in the $B$ projection that computes $\sum_e x_{ep}^B$ and, if the sum does not satisfy the max/min requirement, add a constant to all of the $x_{ep}^B$ to ensure it does. Implementing such a constraint would be less straightforward if one chose to solve this problem with, say, the SAT formulation in Appendix~\ref{app:SAT}.
}

\section{Experiments}\label{sec:exp}
In the experiments that follow, we avoid trivial instances of the problem as much as possible. One obvious trivial instance is when all of the current data are hidden, in which case there are no constraints and thus any configuration of any number of diodes is a solution. However, when none of the data are hidden there are more than enough constraints---enough for one to construct a solution by hand{: For any $(i,j)$ that share a wire and for which current does not flow from $i\to j$, place a diode on wire $ij$ to prevent current from $i\to j$. These diodes alone are sufficient to solve the problem. (See Appendix \ref{app:trivialproof} for a proof.)} So to keep things nontrivial we hide exactly half of the data, except in section \ref{subsec:hidden} when we explore how the difficulty of the problem depends on the number of hidden data. 

The problem is also trivial if there are no wires, in which case there is no current and nowhere to put any diodes. But in a network with a wire between every pair of nodes, once again there is a by-hand construction{: Place diodes blocking every directed edge $i\to j$ except those for which we know current can flow from $i$ to $j$. Even if the data are incomplete, these diodes solve the problem. (See Appendix \ref{app:trivialproof} for a proof.)} So apart from section \ref{subsec:wires}, in which we vary the number of wires, we use a type of bipartite network that has roughly half of its possible wires: We take an even number of nodes and let nodes $i$ and $j$ share a wire if $i+j$ is odd.

\begin{figure}[t]
    \centering
    \includegraphics[width=.48\textwidth]{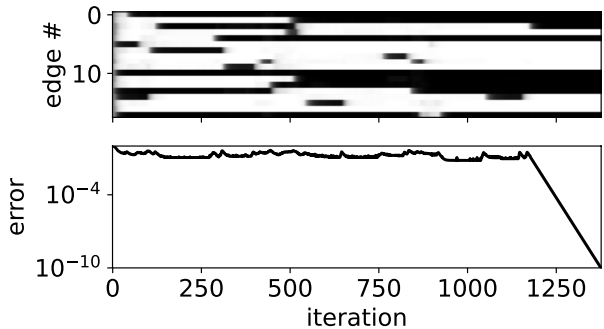}
    \caption{Top: Evolution of the concur estimate $x^B$ as the algorithm searches for a solution to a network with $n=6$ nodes. Each row represents a single directed edge $e$, with lighter color corresponding to larger $x^B_e$. Bottom: The evolution of the error (defined in the main text) for the same trial. The precipitous drop in error after about $1200$ iterations indicates that the algorithm has found a solution.}
    \label{fig:solving}
\end{figure}

The example in Figure \ref{fig:prob} is a typical nontrivial instance---a bipartite network with half of the data hidden. Figure \ref{fig:solving} illustrates the behavior of our algorithm as it tries to solve such an instance. The upper panel shows the evolution of the concur estimate $x^B$, with each row corresponding to $x^B_e$ for a single directed edge $e$. (Since $x^B_{ep}$ is the same for all $p$, we can drop the $p$ from the subscript.) White corresponds to $x^B_e$ closer to $1$, meaning that the algorithm thinks there should be a diode on edge $e$, while black indicates no diode. Gray shades indicate the algorithm is uncertain, or is ``changing its mind'' and adding or removing that diode. The lower panel plots the error, which we define as the RMS of $$\frac{R_B(R_A(x))-x}{2}\,.$$ Comparing the upper and lower panels of Figure \ref{fig:solving}, one can see that the spikes in the error time series coincide with moments when the algorithm adds or removes a diode. Suddenly, after about $1200$ iterations the error drops by many orders of magnitude, reflecting the ``aha moment'' when the algorithm has found a solution. In practice, we let the algorithm stop and declare success once the error falls below $10^{-3}$.

\subsection{Scaling with $n$}\label{subsec:scaling}
\setlength{\tabcolsep}{6pt}
\begin{table}[t]
\centering
    \begin{tabular}{r|l|l}
        $n$ & projection & exhaustive \\
        \hline
        $4$ & $0.00040(4)$ & $0.000018(1)$ \\
        $6$ & $0.008(1)$ & $0.058(6)$ \\
        $8$ & $0.15(2)$ & $3.0(2)\times10^3$ \\
        $10$ & $8(1)$ &  \\
    \end{tabular}
    \caption{Average time, in seconds, to solve an instance of the problem for several values of $n$, comparing our projection method to an exhaustive ``brute force'' approach.}
    \label{tab:scaling}
\end{table}

The most obvious parameter of our problem is the number of nodes, and it is important to know how our method scales with $n$. For each even $n$ from $4$ to $10$ we randomly generated $100$ networks, with the number of diodes chosen to maximize the difficulty of the problem for our algorithm (see Section \ref{subsec:diodes}), then hid half of the current data. To provide a comparison for our algorithm, we also implemented an exhaustive ``brute force'' solution finder, which simply lists all of the possible solutions and checks them one by one until finding one that works. 

Table \ref{tab:scaling} gives the average time required for our projection method and the exhaustive approach. The computational expense grows quite quickly for the projection algorithm: From $n=4$ to $n=10$ the time required increases by five orders of magnitude. However, the exhaustive approach scales much worse: The required time grows by eight orders of magnitude just from $n=4$ to $n=8$. We have not attempted using the exhaustive checker for $n=10$.

One can understand the growth of the search time by considering what each method actually does. In principle, each iteration of our algorithm may involve checking every possible path for current from any one node to any other. We only consider self-avoiding paths, because any path that visits a node more than once is irrelevant as far as current is concerned, but the number of such paths can still grow with $n!$. Meanwhile, the exhaustive approach has to contend with $n^2/2$ undetermined diodes, yielding $2^{n^2/2}$ possible solutions. Factorial growth may be fast, but $2^{n^2/2}$ grows faster. For large networks the exhaustive approach is simply infeasible.

\subsection{Hiding data}\label{subsec:hidden}
In Appendix \ref{app:trivialproof} we prove that the problem is trivial when all or none of the data are hidden. We will now explore how the algorithm behaves in the two trivial extremes and all the cases in between.

\begin{figure}[t]
    \centering
    \includegraphics[width=.45\textwidth]{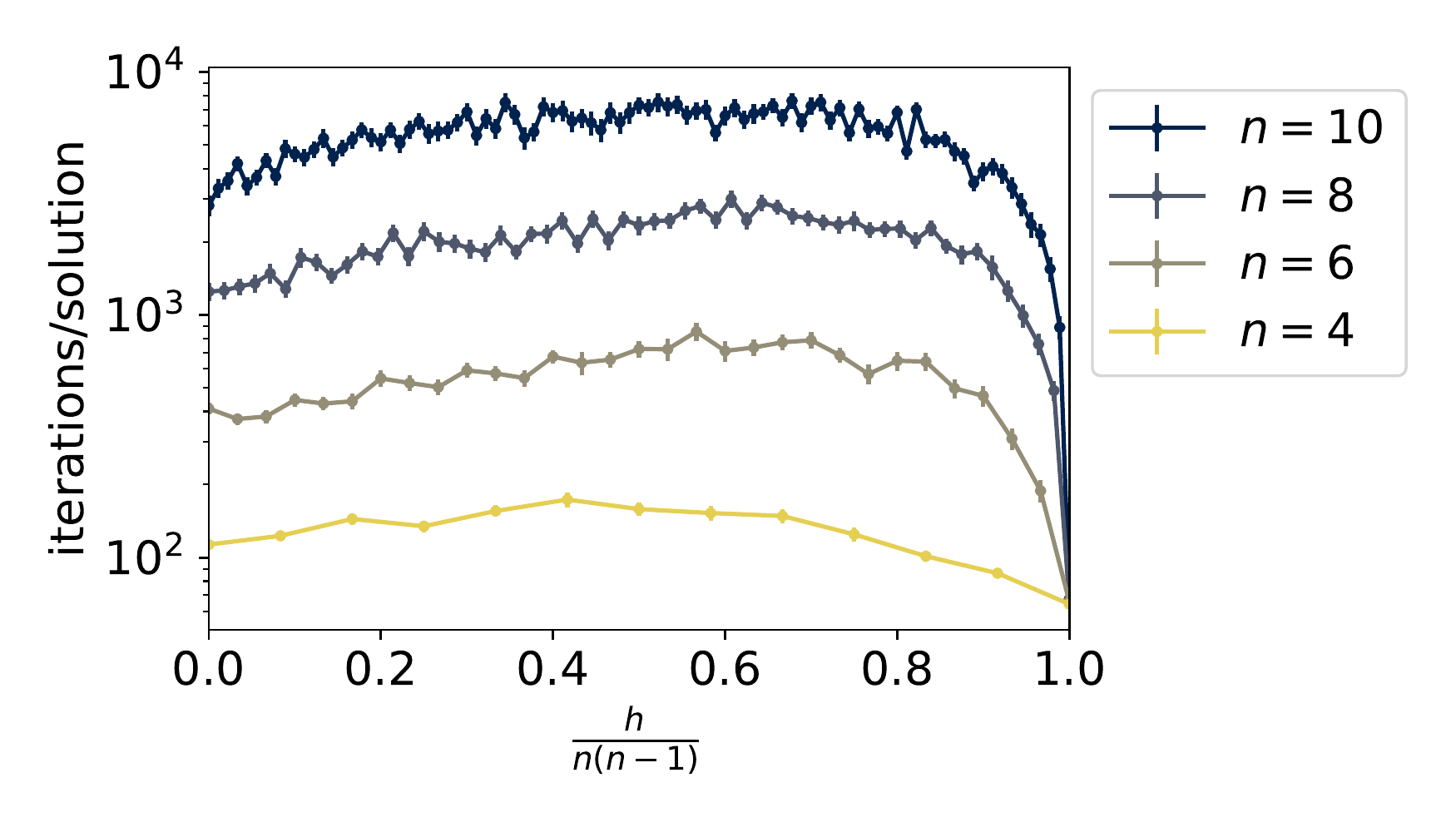}
    \caption{Average number of iterations required to find a solution as a function of the number of hidden data $h$ for several values of $n$.}
    \label{fig:ht}
\end{figure}

For a given $n$ the number of hidden data $h$ can be any integer from $0$ to $n(n-1)$. For each $h$ we generated $100$ instances, once again with the number of diodes for each $n$ chosen to maximize the difficulty, and recorded the average number of iterations required to find a solution. 

Figure \ref{fig:ht} plots the results. Beginning from $h=0$ the required number of iterations increases as more data are hidden. The logarithmic scale disguises the magnitude of this effect somewhat: The number of iterations when half of the data are hidden is actually about twice as many as when none of the data are hidden. However, as $h$ continues increasing there comes a point at which the algorithm begins to find solutions much more quickly, particularly when $h$ reaches $n(n-1)$.

Every entry in the data table is a constraint that limits the space of possible solutions. When very few of the data are hidden the algorithm seems to take advantage of the many constraints, finding solutions in fewer and fewer steps as $h$ approaches $0$. A human does much the same thing in constructing a solution by hand for $h=0$ (see Appendix \ref{app:trivialproof}). On the other hand, when most of the data are hidden, the problem is underconstrained and there are many possible solutions. It is natural that the algorithm finds solutions quite easily in this regime---especially when $h=n(n-1)$, as that is the case in which any network is a solution.

\subsection{Number of diodes}\label{subsec:diodes}
Next we explore how the number of diodes affects the difficulty of the problem. The number of diodes $d$ can be any integer from $0$ to the number of directed edges $|E|$, which is $n^2/2$ for our bipartite networks. For each $d$ we have the algorithm attempt to solve $100$ networks with $d$ randomly placed diodes. In each instance we hide exactly half of the data and record the average number of iterations required to reach a solution.

\begin{figure}[t]
    \centering
    \includegraphics[width=.45\textwidth]{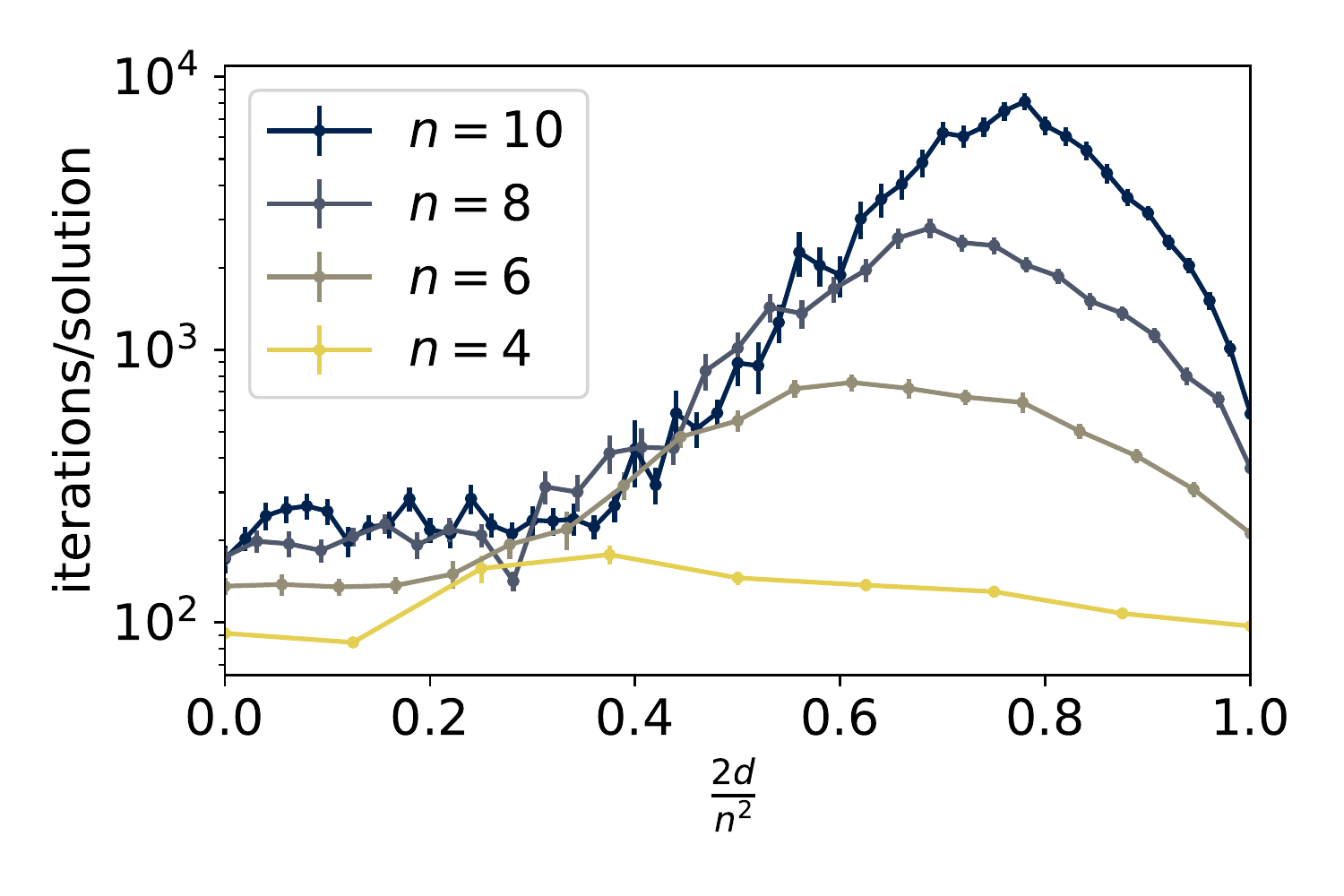}
    \caption{Average number of iterations required to find a solution as a function of the number of diodes $d$ for several values of $n$.}
    \label{fig:diter}
\end{figure}

Figure \ref{fig:diter} plots the results. For each $n$ the required number of iterations is generally small when $d$ is near $0$ or $n^2/2$, with a peak somewhere in between. The peak shifts farther to the right as $n$ increases.

\begin{figure}[t]
    \centering
    \includegraphics[width=.45\textwidth]{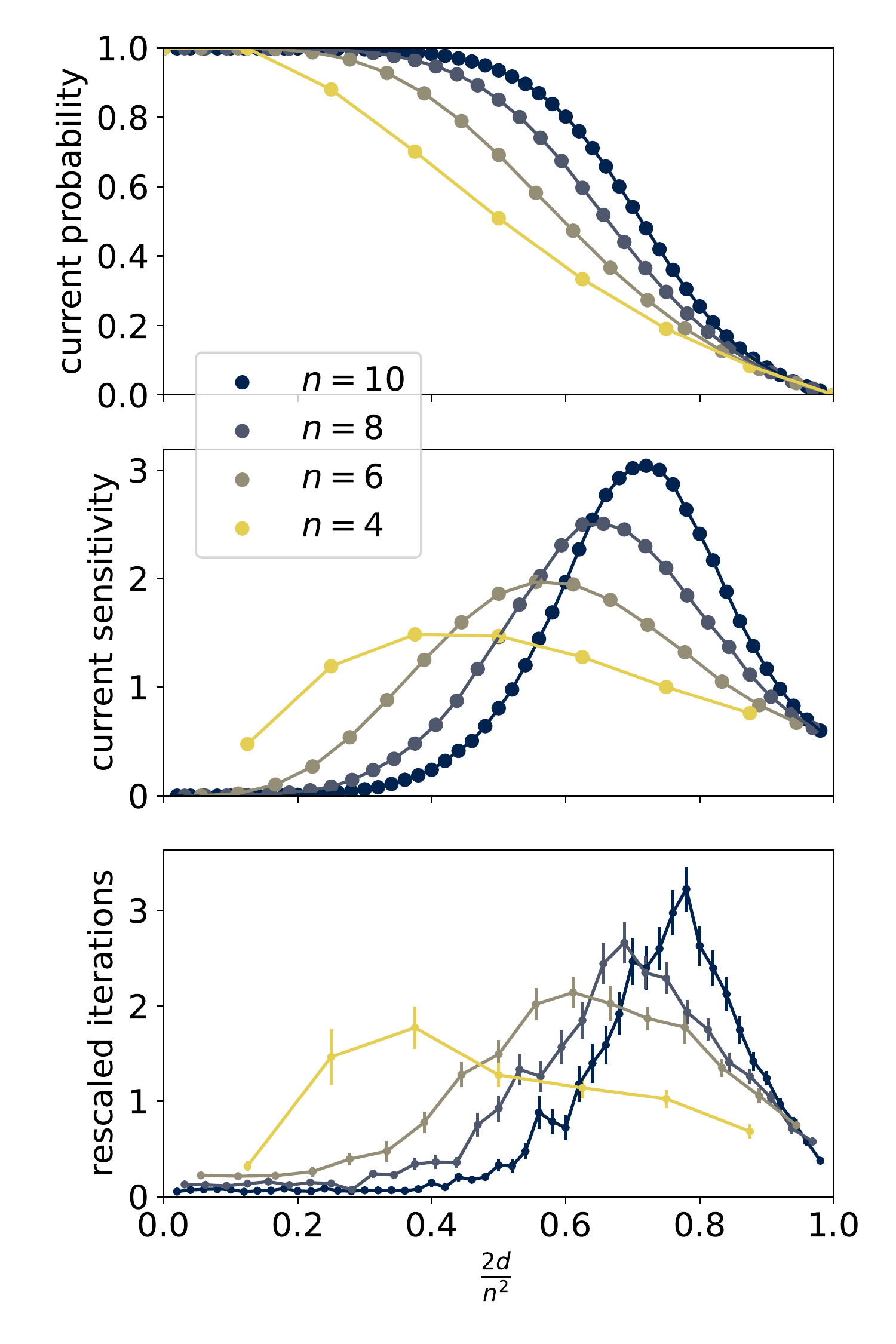}
    \caption{Top: Probability that current can percolate from one randomly chosen node to another, as a function of the number of diodes $d$ in the network. Middle: The current sensitivity, which is the magnitude of the slope of the upper panel. Bottom: The same iterations as in Figure \ref{fig:diter}, after rescaling as described in the text.}
    \label{fig:dI}
\end{figure}

In order to understand this behavior, consider the influence of the diodes on the current percolations of a network. The top panel of Figure \ref{fig:dI} shows the current probability---the likelihood that current can percolate from one randomly chosen node to another---as a function of $d$. There is a range of small $d$ for which current almost always flows, and this range extends farther to the right as $n$ increases. As $d\to n^2/2$, the current probability for any $n$ approaches $0$ linearly.

Both limits make sense intuitively: Current is very likely to flow when the number of diodes is small. Having more nodes allows many more paths, so as $n$ increases it takes many more diodes to block all the paths from one node to another. On the other hand, when $d$ is almost maximal the number of node-pairs for which current percolates is simply the number of directed edges that have not been blocked by a diode, $n^2/2-d$, which does indeed approach $0$ linearly as $d\to n^2/2$.

The middle panel of Figure \ref{fig:dI} gives what we call the current sensitivity, which is simply the magnitude of the slope of the upper panel. The slope is essentially the typical number of percolations that are affected by adding or removing a single diode, so a steeper slope corresponds to a more sensitive network. The current sensitivity curves are qualitatively similar to the curves in Figure \ref{fig:diter}: small on either side, with a peak that shifts to the right as $n$ increases. 

To make the similarity more obvious, in the lower panel of Figure \ref{fig:dI} we re-plot the data from Figure \ref{fig:diter} with a few modifications: First, we plot with a linear rather than logarithmic vertical axis. Second, we subtract off the number of iterations it takes for a completely trivial instance with all of the data hidden. We are essentially removing the part of the search at the far right of Figure \ref{fig:solving}, when the algorithm has found an answer and is simply converging to reach our desired stopping error ($10^{-3}$). This turns out to be about $65$ iterations, regardless of $n$. Finally, since the larger networks take significantly more iterations, we divide the number of iterations by the square of the number of possible diodes, $n^4/4$. Our original motivation for doing so is simply that it makes all of the curves roughly the same magnitude and therefore easy to plot on a linear scale, but in fact it turns out to be a very good rescaling---the plots of this modified iteration number are remarkably similar to the plots of current sensitivity, not only in shape but also in magnitude.

We are hesitant to make any strong claims about whether the similarity of the magnitudes is meaningful, but we are confident that the similar shapes are no coincidence. The networks that require more iterations to solve tend to be those with greater current sensitivity. In other words, the hardest networks to solve are those for which the current percolations are most affected by adding or removing a single diode. Since the algorithm's task is to match the percolation data by adding and removing diodes, it makes sense that the task is more difficult when the currents are more sensitive.

\subsection{Number of wires}\label{subsec:wires}
Finally, we depart from our bipartite networks and explore how the number and placement of wires affects the difficulty of the problem. For a network with $n$ nodes the number of wires $w$ can be any integer from $0$ to ${n(n-1)/2}$. Here we will give results for networks of size $n=6$. The behavior for other values of $n$ is similar. For each $w$, the number of diodes $d$ can be any integer from $0$ to $2w$. For each $d$ we generated $1000$ networks, hid half of the data, and recorded the number of iterations needed to find a solution.

There are many ways in which one can remove wires. One option is to prune at random. Another approach is to disconnect the network as quickly as possible: Pick a node and remove all the wires that connect to it before touching any other wires. Let us call this the recursive method, because once the chosen node is disconnected one can repeat the process with the remaining $n-1$ nodes. Alternatively, one might delay disconnection as long as possible. One way to do this is to pick two nodes and remove the wire they share. Then pick a third node and remove the wires it shares with the first two. Then pick a fourth node and remove the wires it shares with the first three, and so on. We call this the star method, because once we have picked all but one of the nodes we are left with a star graph (in which the unpicked node is the center and the $n-1$ remaining wires connect the center to the other nodes). Removing the last wires from the star completes the pruning.

\begin{figure}[t]
    \centering
    \includegraphics[width=.45\textwidth]{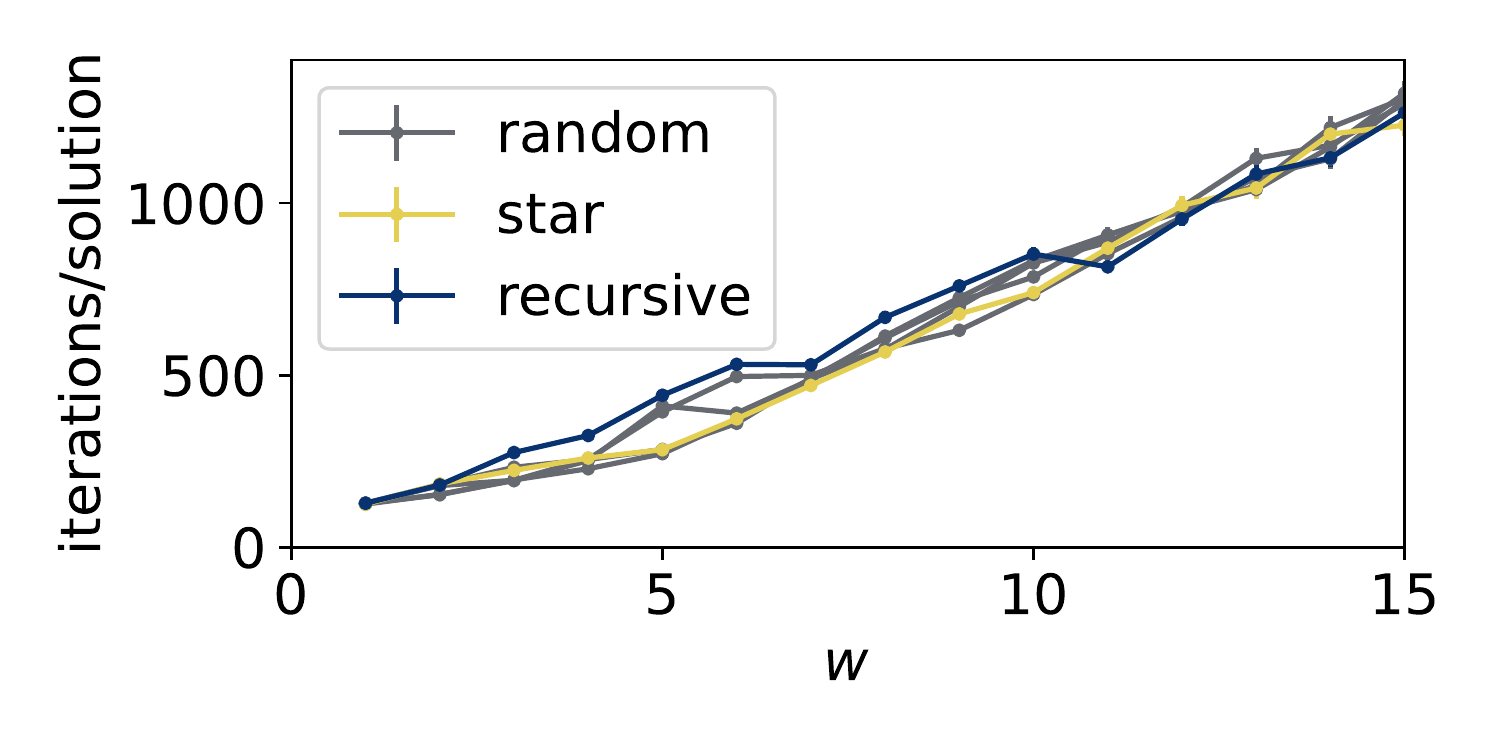}
    \caption{Average number of iterations as a function of the number of wires $w$ for networks with $n=6$ nodes. The dark curve is for the recursive pruning method; the light curve, the star method; the gray curves are several instances of random pruning.}
    \label{fig:n6witer}
\end{figure}

How much does the pruning order matter? To find out, in Figure \ref{fig:n6witer} we plot the average iteration count as a function of $w$ for each of the three methods. As in sections \ref{subsec:scaling} and \ref{subsec:hidden}, we are choosing $d$ to maximize difficulty: For each $w$ we select the $d$ that requires the most iterations to solve. The difficulty almost always grows monotonically with $w$, with little difference between the methods. The disparity between the recursive and star methods is greatest at $w=5$, the moment at which the star method has produced a star graph. An astute observe may notice that the iteration count for the recursive method is higher at $w=10$ than $w=11$. This coincides with the moment the network of $n=6$ nodes becomes disconnected, leaving a single isolated node and the other $5$ nodes completely connected by $10$ wires. The same phenomenon occurs between $w=7$ and $w=6$, the moment when the next node becomes isolated and the remaining $4$ are completely connected.

If all one cares about is the most difficult diode configuration for a given $w$, then the order in which one removes wires does not seem to matter much. But what about the pattern from section \ref{subsec:diodes} in which the iteration count was proportional to the current sensitivity: Is this true for all $w$? 

\begin{figure}[t]
    \centering
    \includegraphics[width=.45\textwidth]{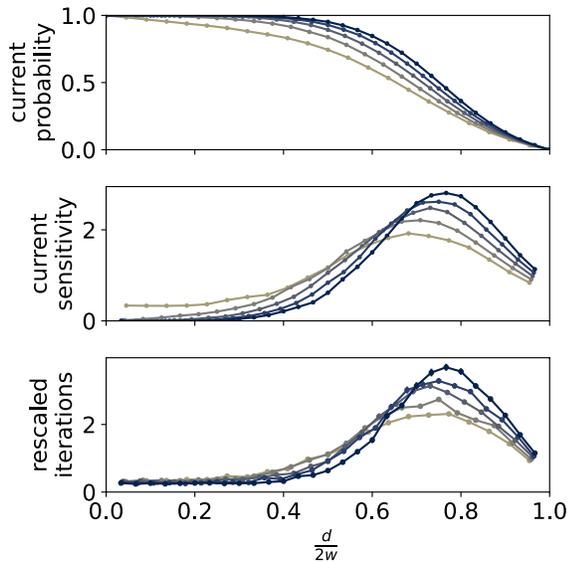}
    \caption{Current probability, current sensitivity, and rescaled iteration count for the recursive pruning method. The colors indicate the number of wires, with the darker curves corresponding to larger $w$.}
    \label{fig:wdIrecursive}
\end{figure}

Figure \ref{fig:wdIrecursive} plots the current probability, current sensitivity, and average number of iterations (rescaled in the same way as in section \ref{subsec:diodes}) for the recursive pruning method, with lighter color corresponding to fewer wires. Thanks to recursion we only need to plot the results from $w=15$ down to $w=11$. After that it reduces to the analogous problem for $n=5$. These results echo those of section \ref{subsec:diodes} in that the rescaled iteration count is roughly proportional to the current sensitivity, with the possible exception of the lightest curve, corresponding to $w=11$.

\begin{figure}[t]
    \centering
    \includegraphics[width=.45\textwidth]{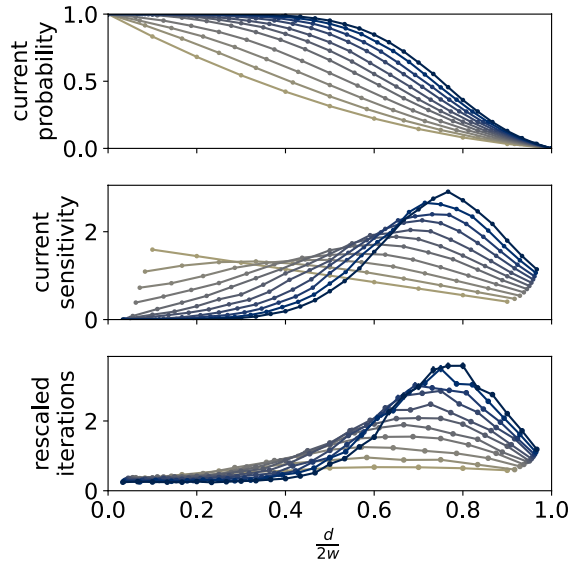}
    \caption{Current probability, current sensitivity, and rescaled iteration count for the star pruning method. The colors indicate the number of wires, with the darker curves corresponding to larger $w$.}
    \label{fig:wdIstar}
\end{figure}

Figure \ref{fig:wdIstar} plots the current probability, current sensitivity, and average number of iterations for the star pruning method. This time we must take $w$ from $15$ all the way down to $5$ before the network becomes disconnected. Here we see a significant discrepancy between the iteration count and the current sensitivity, particularly for the networks with the fewest wires and relatively few diodes. 

Evidently the pattern from section \ref{subsec:diodes} is not universal. It holds for well-connected networks with many paths from one node to another, such as the bipartite networks used in most of this paper, but it falters for more sparse networks, especially those with ``peninsular'' nodes that only have one wire connecting them to the rest of the network. Indeed, the $w=11$ curve in Figure \ref{fig:wdIrecursive}, the first in which we noticed a discrepancy between iteration count and current sensitivity, corresponds to the network with one node connected to the rest of the network by just a single wire. For the star method, the peninsulas appear at $w=8,7,6,5$---precisely the networks for which the discrepancy becomes most obvious in Figure \ref{fig:wdIstar}.

\section{Conclusion}
We have demonstrated a divide-and-concur iterative projection method for solving a percolation inverse problem in diode networks. Though the computational expense of our method grows rapidly with the size of the network, it grows much slower than that of the exhaustive approach, making our method far more practical for networks with more than a few nodes. 

We find that the projection algorithm generally requires more iterations to find a solution when one hides some of the current data, but only up to a point: If nearly all of the data are hidden the lack of constraints makes the problem very easy to solve. We have also used our algorithm to explore how the difficulty of the problem depends on the number of diodes in the network. Our results suggest that the most difficult networks are those for which the currents are most sensitive to the addition or removal of a single diode, although this pattern breaks down for highly pruned networks in which many nodes only have a single wire.

\section{Acknowledgments}
The author thanks Veit Elser (who helped to conceive of this problem), Paul McEuen, and James Sethna for valuable feedback and conversations.

\appendix
{ \section{Mapping to SAT}\label{app:SAT}
The SAT problem is typically formulated in conjunctive normal form (CNF): One is given a list of disjunctive clauses, such as $x_1 \,\vee\, \neg x_2 \,\vee\, x_3$, where each $x_i$ is a Boolean variable. Each $x_i$ can appear in several different clauses, and the task is to find values for the $x_i$ that make every clause true---that is, to satisfy the conjunction of all the clauses. 

It is possible to convert our task into a SAT problem. Let $W(i,j)$ be the set of (self-avoiding) paths from $i$ to $j$. Observing no current from $i$ to $j$ means every path is blocked, a fact which is easily expressed in CNF: 
\begin{equation}
    \bigwedge_{w\in W(i,j)} \left[\bigvee_{e\in w} x_e\right].
\end{equation}
If current \textit{is} observed, it means at least one path is unblocked, which is most naturally written as a disjunction of conjunctions:
\begin{equation}
    \bigvee_{w\in W(i,j)} \left[\bigwedge_{e\in w} \neg x_e\right].
    \label{eq:current}
\end{equation}
There is more than one way to convert a disjunction of conjunctions to a conjunction of disjunctions. The most direct method converts $c$ conjunctions of length $\ell$ to $\ell^c$ disjunctions. For example,
\begin{multline}
    (x_1 \,\wedge\, x_2 \,\wedge\, x_3) \,\vee\, (y_1 \,\wedge\, y_2 \,\wedge\, y_3)\\
    = (x_1 \,\vee\, y_1) \,\wedge\, (x_2 \,\vee\, y_1) \,\wedge\, (x_1 \,\vee\, y_2) \,\wedge\, ... \,\wedge\, (x_3 \,\vee\, y_3).
\end{multline}
However, this is likely to produce an unacceptable number of clauses. For our bipartite networks with $n=10$, there are $576$ self-voiding paths of length $9$ from one node to another (if one node is odd and the other is even) or $720$ paths of length $8$ (if both nodes are odd or both even). This would yield either $9^{576}$ or $8^{720}$ clauses, just for one current observation (without even considering paths of other lengths). A more practical method is to introduce extra Boolean variables: Let there be a $z_w$ for every $w\in W(i,j)$. Then
\begin{equation}
    \left[\bigvee_{w\in W(i,j)} z_w\right] \,\wedge\, \left[\bigwedge_{w\in W(i,j)} \left[\bigwedge_{e\in w}\neg z_w \,\vee\, \neg x_e\right]\right]
\end{equation}
has the same satisfiability as \eqref{eq:current}. Unfortunately, this still requires the introduction of a new variable for each path involved. 

It is worth noting that the CNF approach handles the no-current constraints quite naturally but is rather less elegant for the pairs that do have current, whereas our algorithm (Section \ref{sec:algo}) has a straightforward projection for the pairs with current but uses a quasi-projection to handle the no-current pairs. Perhaps a combined approach---using our method for the current pairs and CNF for the no-current pairs---could work more efficiently than either method alone.
}

\section{The quasi-projection to set $A$}\label{app:projA}
{Our quasi-projection to $A$ always finds a point in set $A$, but there are some circumstances in which it does not select the \textit{closest} point in $A$.

Suppose that no current was observed for pair $p$ and that, after the initial rounding of all $x_{ep}$ to $0$ or $1$ as prescribed in Eq.~\eqref{eq:Around}, two paths need to be blocked. That is, both paths had $x_{ep}<0.5$ on all of their edges $e$.} The extra distance for changing $x^A_{ep}$ from $0$ to $1$ is \begin{equation*}
    (1-x_{ep})^2-x_{ep}^2=1-2x_{ep}.
\end{equation*} For each path individually, the best move is to block the edge with the largest $x_{ep}$. Let $e_1$ be that edge for path $1$, and likewise for path $2$. For definiteness, suppose ${x_{e_1p}=x_{e_2p}=0.}1$, so that the added distance for blocking each path is $0.8$. But if the two paths share an edge $e_0\neq e_1,e_2$ with $x_{e_0p}=0$, then blocking $e_0$ is actually the best move overall, adding a distance of only $1$ compared to ${2\times0.8=1.6}$.

{Admittedly, this is a contrived example, and in practice such situations do not seem to prevent the algorithm from finding solutions.}

\section{Constructing a solution by hand}\label{app:trivialproof}
Suppose we know whether current can percolate from $i$ to $j$ for every $(i,j)$. For any $(i,j)$ that share a wire and for which current does not flow from $i\to j$, we must place a diode on wire $ij$ that will prevent current from $i\to j$. We have to place these diodes because otherwise current would flow from $i\to j$, in violation of our data. 

\begin{claim} This placement of diodes solves the problem.
\end{claim}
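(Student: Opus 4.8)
The plan is to show that the constructed network reproduces every cell of the data table: current percolates from $i$ to $j$ in the constructed network if and only if the data record a percolation from $i$ to $j$. Writing $M$ for the constructed adjacency matrix, the construction amounts to setting $M_{ij}=1$ exactly when $i$ and $j$ share a wire and the data report a percolation from $i$ to $j$; every other directed edge between adjacent nodes is blocked by a diode, and non-adjacent pairs have no edge at all. The central fact I would exploit is that percolation is a reachability relation in a directed graph, and reachability is transitive: if current percolates from $a$ to $b$ and from $b$ to $c$, then it percolates from $a$ to $c$. I would also use that the data are consistent---they were produced by an actual diode network $M^{\mathrm{orig}}$---so the observed percolation relation $P$ is precisely the transitive closure of the directed edges of $M^{\mathrm{orig}}$.

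First I would establish the easy inclusion, that the constructed network reproduces every \emph{observed} percolation. Because an open directed edge always yields a direct percolation, every edge of $M^{\mathrm{orig}}$ joins adjacent nodes with $P(i,j)=1$ and is therefore retained in $M$; hence $M$ contains every edge of $M^{\mathrm{orig}}$, and possibly more. Reachability can only grow when edges are added, so any directed path present in $M^{\mathrm{orig}}$ survives in $M$, and every percolation in the data is reproduced.

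Next I would handle the harder inclusion, that $M$ creates no \emph{spurious} percolations. Take any directed path $i=v_0\to v_1\to\dots\to v_k=j$ in the constructed network. By construction each step $v_t\to v_{t+1}$ is an edge only because $P(v_t,v_{t+1})=1$, so the data already record a percolation along every single step. Transitivity of percolation then forces $P(i,j)=1$, i.e.\ the data already record a percolation from $i$ to $j$. Thus every percolation produced by $M$ is one the data already contain. Combining the two inclusions shows the percolation relation of $M$ equals $P$, which is the claim. Compactly, writing $E(\cdot)$ for the set of directed edges and $R(\cdot)$ for the reachability closure, the whole argument is the chain $P=R(E(M^{\mathrm{orig}}))\subseteq R(E(M))\subseteq R(P)=P$, which forces equality throughout.

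The step I expect to be the main obstacle is the second inclusion, the absence of spurious percolations. The construction leaves open every directed edge it is not explicitly forced to block, so a priori one might worry that stringing these open edges together produces long-range currents the data never reported. The resolution is the observation that each retained edge is itself a recorded percolation, so transitivity confines the entire reachability closure to the already-observed relation $P$; the linchpin is therefore verifying that $P$ is genuinely transitively closed, which holds because $P$ is the reachability relation of a real network.
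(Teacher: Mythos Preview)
Your proposal is correct and follows essentially the same route as the paper: both directions break a percolation path into single edges, use that an edge is open in the constructed network exactly when the data record a percolation along it, and then invoke transitivity of the observed percolation relation (which holds because the data come from an actual circuit). Your edge-containment phrasing $E(M^{\mathrm{orig}})\subseteq E(M)$ for the first inclusion is a slightly cleaner packaging of the paper's step-by-step argument, but the substance is identical.
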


\begin{proof}
Suppose current can percolate from $i$ to $j$ in the actual circuit used to generate the data; that is, there exists a percolation path from $i$ to $j$. We can break the path into steps: $i\to k$, ..., $l\to j$. Then current also percolates from $i$ to $k$, so we must have left $i\to k$ unblocked in our solution circuit. The same holds for every other step in the path. Concatenating these steps together shows that current can percolate from $i$ to $j$ in our solution circuit.

In the converse direction, suppose current has a path from $i$ to $j$ in our solution circuit. Again, we can break the path into steps: $i\to k$, ..., $l\to j$. We would have blocked $i\to k$ in our solution circuit if the data had told us current does not flow from $i$ to $k$, so the fact that we did not implies that current can percolate from $i$ to $k$ in the actual circuit. The same holds for every other step in the path. Any complete dataset generated from an actual circuit has current transitivity: If we observe current from $a\to b$ and $b\to c$ then we must also observe current from $a\to c$. Thus, by the transitive property we conclude that current can percolate from $i$ to $j$ in the actual circuit.
\end{proof}

The reasoning in the converse direction breaks down if we do not have complete data. There may be percolations in our solution that are forbidden in the actual circuit.

However, if the network is completely connected then there is another construction that will work even with incomplete data: Place diodes blocking every directed edge $i\to j$ except those for which we know current can flow from $i$ to $j$. 

\begin{claim} This placement of diodes solves the problem: The solution may not have the same percolations as the circuit that generated the data, but it is consistent with all the available data.
\end{claim}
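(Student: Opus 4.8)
The plan is to establish the two required implications separately, mirroring the structure of the previous claim's proof but now using complete connectivity to repair the converse direction that failed for incomplete data. First I would fix notation for the constructed solution: the directed edge $i\to j$ is left open precisely when the data record current from $i$ to $j$, and is blocked otherwise---whether the cell reads ``no current'' or is hidden. Complete connectivity is what makes this construction well defined, since every pair of nodes shares a wire, so every directed edge $i\to j$ genuinely exists and can be independently opened or blocked (with $0$, $1$, or $2$ diodes on the underlying wire).

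The forward direction is immediate. Suppose the data record current from $i$ to $j$. Then by construction the direct edge $i\to j$ is open, so current percolates from $i$ to $j$ along this length-one path in the solution. This is exactly where complete connectivity is essential: without a guaranteed direct wire I could not force percolation to match an observed current, since every other edge out of $i$ might be blocked.

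For the no-current direction I would argue by contradiction. Suppose the data record no current from $i$ to $j$ but current nonetheless percolates in the solution, so there is a path $i=v_0\to v_1\to\cdots\to v_m=j$ along open edges. By construction, each step $v_t\to v_{t+1}$ being open means we observed current from $v_t$ to $v_{t+1}$; since observations are truthful, current actually percolates across each such step in the circuit that generated the data. Invoking the transitivity of actual percolation (stated in the previous claim), current must percolate from $i$ to $j$ in the actual circuit, contradicting the truthful observation of no current. The main thing to get right here is that every edge of the solution path corresponds to a genuine positive observation rather than a hidden cell, so the transitivity chain is assembled entirely from real data; this is precisely what lets the argument survive when some cells are hidden, whereas the converse direction of the earlier claim did not.

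Finally I would note that hidden cells impose no constraint, so consistency there is automatic, and that the solution need not reproduce the actual circuit's percolations---only the observed ones---which is all the claim asserts. I expect the chief subtlety to be pinpointing the exact role of complete connectivity: it is needed only in the forward direction, to realize each observed current by a single open edge, while the no-current direction rests on truthfulness and transitivity and would hold regardless of connectivity.
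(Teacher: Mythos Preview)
Your proposal is correct and follows essentially the same approach as the paper: the forward direction uses the direct edge $i\to j$ (which exists by complete connectivity) to realize any observed current, and the no-current direction decomposes a hypothetical percolation path in the solution into steps, each of which must correspond to a genuine positive observation, then invokes transitivity in the actual circuit to derive a contradiction. Your framing of the second part as a proof by contradiction is the contrapositive of the paper's ``path in solution $\Rightarrow$ current in actual circuit $\Rightarrow$ data cannot say no current,'' and your remark that complete connectivity is used only in the forward direction matches the paper's closing observation.
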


\begin{proof}
Suppose the data indicate that current can percolate from $i$ to $j$. Then we did not place a diode blocking $i\to j$ in our solution circuit, so there is certainly a path from $i$ to $j$ in our solution circuit. The contrapositive of this implication is that if there is no percolation from $i$ to $j$ in our circuit, then there is no percolation from $i$ to $j$ in the data table: The data table either says no percolation, or it contains no information on $(i,j)$. Either way, we are consistent with the data.

Now suppose there is a percolation path from $i$ to $j$ in our solution circuit, and break the path into steps: $i\to k$, ..., $l\to j$. We would not have left $i\to k$ unblocked in our solution circuit unless the data told us current flows from $i$ to $k$, so the data must indicate that current percolates from $i$ to $k$. The same holds for every other step in the path. By the transitive property discussed above, current can percolate from $i$ to $j$ in the circuit that generated the data. So the data table either says so, or it contains no information on $(i,j)$. Once again, we are consistent with the data.
\end{proof}

The hypothesis of complete connectivity is necessary: If the network is not completely connected, there may be a pair of nodes $(i,j)$ that do not share a wire, but do have percolation according to the data. The direct path $i\to j$ does not exist in this case, so we cannot guarantee that our solution allows percolation from $i$ to $j$.

There are surely other ways to construct solutions, but we consider these the simplest. In restricting ourselves to cases in which these constructions fail, we are in essence asking our algorithm to be more clever than we have been in making these constructions. At some point it becomes more time-consuming to come up with better constructions than to simply implement our algorithm.

\bibliography{diodes}

%apsrev4-2.bst 2019-01-14 (MD) hand-edited version of apsrev4-1.bst
%Control: key (0)
%Control: author (72) initials jnrlst
%Control: editor formatted (1) identically to author
%Control: production of article title (-1) disabled
%Control: page (0) single
%Control: year (1) truncated
%Control: production of eprint (0) enabled
\providecommand{\noopsort}[1]{}\providecommand{\singleletter}[1]{#1}%
\begin{thebibliography}{30}%
\makeatletter
\providecommand \@ifxundefined [1]{%
 \@ifx{#1\undefined}
}%
\providecommand \@ifnum [1]{%
 \ifnum #1\expandafter \@firstoftwo
 \else \expandafter \@secondoftwo
 \fi
}%
\providecommand \@ifx [1]{%
 \ifx #1\expandafter \@firstoftwo
 \else \expandafter \@secondoftwo
 \fi
}%
\providecommand \natexlab [1]{#1}%
\providecommand \enquote  [1]{``#1''}%
\providecommand \bibnamefont  [1]{#1}%
\providecommand \bibfnamefont [1]{#1}%
\providecommand \citenamefont [1]{#1}%
\providecommand \href@noop [0]{\@secondoftwo}%
\providecommand \href [0]{\begingroup \@sanitize@url \@href}%
\providecommand \@href[1]{\@@startlink{#1}\@@href}%
\providecommand \@@href[1]{\endgroup#1\@@endlink}%
\providecommand \@sanitize@url [0]{\catcode `\\12\catcode `\$12\catcode
  `\&12\catcode `\#12\catcode `\^12\catcode `\_12\catcode `\%12\relax}%
\providecommand \@@startlink[1]{}%
\providecommand \@@endlink[0]{}%
\providecommand \url  [0]{\begingroup\@sanitize@url \@url }%
\providecommand \@url [1]{\endgroup\@href {#1}{\urlprefix }}%
\providecommand \urlprefix  [0]{URL }%
\providecommand \Eprint [0]{\href }%
\providecommand \doibase [0]{https://doi.org/}%
\providecommand \selectlanguage [0]{\@gobble}%
\providecommand \bibinfo  [0]{\@secondoftwo}%
\providecommand \bibfield  [0]{\@secondoftwo}%
\providecommand \translation [1]{[#1]}%
\providecommand \BibitemOpen [0]{}%
\providecommand \bibitemStop [0]{}%
\providecommand \bibitemNoStop [0]{.\EOS\space}%
\providecommand \EOS [0]{\spacefactor3000\relax}%
\providecommand \BibitemShut  [1]{\csname bibitem#1\endcsname}%
\let\auto@bib@innerbib\@empty
%</preamble>
\bibitem [{\citenamefont {Obukhov}(1980)}]{obukhov1980problem}%
  \BibitemOpen
  \bibfield  {author} {\bibinfo {author} {\bibfnamefont {S.}~\bibnamefont
  {Obukhov}},\ }\href
  {https://doi.org/https://doi.org/10.1016/0378-4371(80)90105-3} {\bibfield
  {journal} {\bibinfo  {journal} {Physica A: Statistical Mechanics and its
  Applications}\ }\textbf {\bibinfo {volume} {101}},\ \bibinfo {pages} {145}
  (\bibinfo {year} {1980})}\BibitemShut {NoStop}%
\bibitem [{\citenamefont {Takeuchi}\ \emph {et~al.}(2007)\citenamefont
  {Takeuchi}, \citenamefont {Kuroda}, \citenamefont {Chat{\'e}},\ and\
  \citenamefont {Sano}}]{takeuchi2007liquid}%
  \BibitemOpen
  \bibfield  {author} {\bibinfo {author} {\bibfnamefont {K.~A.}\ \bibnamefont
  {Takeuchi}}, \bibinfo {author} {\bibfnamefont {M.}~\bibnamefont {Kuroda}},
  \bibinfo {author} {\bibfnamefont {H.}~\bibnamefont {Chat{\'e}}},\ and\
  \bibinfo {author} {\bibfnamefont {M.}~\bibnamefont {Sano}},\ }\href@noop {}
  {\bibfield  {journal} {\bibinfo  {journal} {Physical review letters}\
  }\textbf {\bibinfo {volume} {99}},\ \bibinfo {pages} {234503} (\bibinfo
  {year} {2007})}\BibitemShut {NoStop}%
\bibitem [{\citenamefont {Tang}\ and\ \citenamefont
  {Leschhorn}(1992)}]{tang1992pinning}%
  \BibitemOpen
  \bibfield  {author} {\bibinfo {author} {\bibfnamefont {L.-H.}\ \bibnamefont
  {Tang}}\ and\ \bibinfo {author} {\bibfnamefont {H.}~\bibnamefont
  {Leschhorn}},\ }\href@noop {} {\bibfield  {journal} {\bibinfo  {journal}
  {Physical Review A}\ }\textbf {\bibinfo {volume} {45}},\ \bibinfo {pages}
  {R8309} (\bibinfo {year} {1992})}\BibitemShut {NoStop}%
\bibitem [{\citenamefont {Pomeau}(1986)}]{pomeau1986hydrodynamics}%
  \BibitemOpen
  \bibfield  {author} {\bibinfo {author} {\bibfnamefont {Y.}~\bibnamefont
  {Pomeau}},\ }\href@noop {} {\bibfield  {journal} {\bibinfo  {journal}
  {Physica D: Nonlinear Phenomena}\ }\textbf {\bibinfo {volume} {23}},\
  \bibinfo {pages} {3} (\bibinfo {year} {1986})}\BibitemShut {NoStop}%
\bibitem [{\citenamefont {Domany}\ and\ \citenamefont
  {Kinzel}(1984)}]{domany1984equivalence}%
  \BibitemOpen
  \bibfield  {author} {\bibinfo {author} {\bibfnamefont {E.}~\bibnamefont
  {Domany}}\ and\ \bibinfo {author} {\bibfnamefont {W.}~\bibnamefont
  {Kinzel}},\ }\href@noop {} {\bibfield  {journal} {\bibinfo  {journal}
  {Physical review letters}\ }\textbf {\bibinfo {volume} {53}},\ \bibinfo
  {pages} {311} (\bibinfo {year} {1984})}\BibitemShut {NoStop}%
\bibitem [{\citenamefont {Springer}\ and\ \citenamefont
  {Kenyon}(2021)}]{springer2021gameoflife}%
  \BibitemOpen
  \bibfield  {author} {\bibinfo {author} {\bibfnamefont {J.~M.}\ \bibnamefont
  {Springer}}\ and\ \bibinfo {author} {\bibfnamefont {G.~T.}\ \bibnamefont
  {Kenyon}},\ }in\ \href@noop {} {\emph {\bibinfo {booktitle} {2021
  International Joint Conference on Neural Networks (IJCNN)}}}\ (\bibinfo
  {organization} {IEEE},\ \bibinfo {year} {2021})\ pp.\ \bibinfo {pages}
  {1--8}\BibitemShut {NoStop}%
\bibitem [{\citenamefont {Elser}(2021)}]{elser2021reconstructing}%
  \BibitemOpen
  \bibfield  {author} {\bibinfo {author} {\bibfnamefont {V.}~\bibnamefont
  {Elser}},\ }\href {https://doi.org/10.1103/PhysRevE.104.034301} {\bibfield
  {journal} {\bibinfo  {journal} {Phys. Rev. E}\ }\textbf {\bibinfo {volume}
  {104}},\ \bibinfo {pages} {034301} (\bibinfo {year} {2021})}\BibitemShut
  {NoStop}%
\bibitem [{\citenamefont {Fan}\ \emph {et~al.}(2012)\citenamefont {Fan},
  \citenamefont {Liu}, \citenamefont {Li},\ and\ \citenamefont
  {Chen}}]{fan2012random}%
  \BibitemOpen
  \bibfield  {author} {\bibinfo {author} {\bibfnamefont {J.}~\bibnamefont
  {Fan}}, \bibinfo {author} {\bibfnamefont {M.}~\bibnamefont {Liu}}, \bibinfo
  {author} {\bibfnamefont {L.}~\bibnamefont {Li}},\ and\ \bibinfo {author}
  {\bibfnamefont {X.}~\bibnamefont {Chen}},\ }\href@noop {} {\bibfield
  {journal} {\bibinfo  {journal} {Physical Review E}\ }\textbf {\bibinfo
  {volume} {85}},\ \bibinfo {pages} {061110} (\bibinfo {year}
  {2012})}\BibitemShut {NoStop}%
\bibitem [{\citenamefont {Lee}\ \emph {et~al.}(2018)\citenamefont {Lee},
  \citenamefont {Kahng}, \citenamefont {Cho}, \citenamefont {Goh},\ and\
  \citenamefont {Lee}}]{lee2018complex}%
  \BibitemOpen
  \bibfield  {author} {\bibinfo {author} {\bibfnamefont {D.}~\bibnamefont
  {Lee}}, \bibinfo {author} {\bibfnamefont {B.}~\bibnamefont {Kahng}}, \bibinfo
  {author} {\bibfnamefont {Y.}~\bibnamefont {Cho}}, \bibinfo {author}
  {\bibfnamefont {K.-I.}\ \bibnamefont {Goh}},\ and\ \bibinfo {author}
  {\bibfnamefont {D.-S.}\ \bibnamefont {Lee}},\ }\href@noop {} {\bibfield
  {journal} {\bibinfo  {journal} {Journal of the Korean Physical Society}\
  }\textbf {\bibinfo {volume} {73}},\ \bibinfo {pages} {152} (\bibinfo {year}
  {2018})}\BibitemShut {NoStop}%
\bibitem [{\citenamefont {Colomer-de Sim{\'o}n}\ and\ \citenamefont
  {Bogun{\'a}}(2014)}]{colomer2014double}%
  \BibitemOpen
  \bibfield  {author} {\bibinfo {author} {\bibfnamefont {P.}~\bibnamefont
  {Colomer-de Sim{\'o}n}}\ and\ \bibinfo {author} {\bibfnamefont
  {M.}~\bibnamefont {Bogun{\'a}}},\ }\href@noop {} {\bibfield  {journal}
  {\bibinfo  {journal} {Physical Review X}\ }\textbf {\bibinfo {volume} {4}},\
  \bibinfo {pages} {041020} (\bibinfo {year} {2014})}\BibitemShut {NoStop}%
\bibitem [{\citenamefont {Miller}(2009)}]{miller2009cluster}%
  \BibitemOpen
  \bibfield  {author} {\bibinfo {author} {\bibfnamefont {J.~C.}\ \bibnamefont
  {Miller}},\ }\href@noop {} {\bibfield  {journal} {\bibinfo  {journal}
  {Physical Review E}\ }\textbf {\bibinfo {volume} {80}},\ \bibinfo {pages}
  {020901} (\bibinfo {year} {2009})}\BibitemShut {NoStop}%
\bibitem [{\citenamefont {Broadbent}\ and\ \citenamefont
  {Hammersly}(1957)}]{broadbent1957percolation}%
  \BibitemOpen
  \bibfield  {author} {\bibinfo {author} {\bibfnamefont {S.~R.}\ \bibnamefont
  {Broadbent}}\ and\ \bibinfo {author} {\bibfnamefont {J.~M.}\ \bibnamefont
  {Hammersly}},\ }\href@noop {} {\bibfield  {journal} {\bibinfo  {journal}
  {Math. Proc. Cambridge Philosophical Soc.}\ }\textbf {\bibinfo {volume}
  {53}},\ \bibinfo {pages} {629} (\bibinfo {year} {1957})}\BibitemShut
  {NoStop}%
\bibitem [{\citenamefont {Redner}(1982)}]{redner1982directed}%
  \BibitemOpen
  \bibfield  {author} {\bibinfo {author} {\bibfnamefont {S.}~\bibnamefont
  {Redner}},\ }\href@noop {} {\bibfield  {journal} {\bibinfo  {journal} {Phys.
  Rev. B}\ }\textbf {\bibinfo {volume} {25}},\ \bibinfo {pages} {3242}
  (\bibinfo {year} {1982})}\BibitemShut {NoStop}%
\bibitem [{\citenamefont {Redner}\ and\ \citenamefont
  {Mueller}(1982)}]{redner1982conductivity}%
  \BibitemOpen
  \bibfield  {author} {\bibinfo {author} {\bibfnamefont {S.}~\bibnamefont
  {Redner}}\ and\ \bibinfo {author} {\bibfnamefont {P.~R.}\ \bibnamefont
  {Mueller}},\ }\href@noop {} {\bibfield  {journal} {\bibinfo  {journal} {Phys.
  Rev. B}\ }\textbf {\bibinfo {volume} {26}},\ \bibinfo {pages} {5293}
  (\bibinfo {year} {1982})}\BibitemShut {NoStop}%
\bibitem [{\citenamefont {Verbavatz}\ and\ \citenamefont
  {Barthelemy}(2021)}]{verbavatz2021oneway}%
  \BibitemOpen
  \bibfield  {author} {\bibinfo {author} {\bibfnamefont {V.}~\bibnamefont
  {Verbavatz}}\ and\ \bibinfo {author} {\bibfnamefont {M.}~\bibnamefont
  {Barthelemy}},\ }\href@noop {} {\bibfield  {journal} {\bibinfo  {journal}
  {Physical Review E}\ }\textbf {\bibinfo {volume} {103}},\ \bibinfo {pages}
  {042313} (\bibinfo {year} {2021})}\BibitemShut {NoStop}%
\bibitem [{\citenamefont {De~Noronha}\ \emph {et~al.}(2018)\citenamefont
  {De~Noronha}, \citenamefont {Moreira}, \citenamefont {Vieira}, \citenamefont
  {Herrmann}, \citenamefont {Andrade~Jr},\ and\ \citenamefont
  {Carmona}}]{denoronha2018isotropic}%
  \BibitemOpen
  \bibfield  {author} {\bibinfo {author} {\bibfnamefont {A.~W.}\ \bibnamefont
  {De~Noronha}}, \bibinfo {author} {\bibfnamefont {A.~A.}\ \bibnamefont
  {Moreira}}, \bibinfo {author} {\bibfnamefont {A.~P.}\ \bibnamefont {Vieira}},
  \bibinfo {author} {\bibfnamefont {H.~J.}\ \bibnamefont {Herrmann}}, \bibinfo
  {author} {\bibfnamefont {J.~S.}\ \bibnamefont {Andrade~Jr}},\ and\ \bibinfo
  {author} {\bibfnamefont {H.~A.}\ \bibnamefont {Carmona}},\ }\href@noop {}
  {\bibfield  {journal} {\bibinfo  {journal} {Physical Review E}\ }\textbf
  {\bibinfo {volume} {98}},\ \bibinfo {pages} {062116} (\bibinfo {year}
  {2018})}\BibitemShut {NoStop}%
\bibitem [{\citenamefont {Runge}(2018)}]{runge2018causal}%
  \BibitemOpen
  \bibfield  {author} {\bibinfo {author} {\bibfnamefont {J.}~\bibnamefont
  {Runge}},\ }\href@noop {} {\bibfield  {journal} {\bibinfo  {journal} {Chaos:
  An Interdisciplinary Journal of Nonlinear Science}\ }\textbf {\bibinfo
  {volume} {28}},\ \bibinfo {pages} {075310} (\bibinfo {year}
  {2018})}\BibitemShut {NoStop}%
\bibitem [{\citenamefont {Sontag}(2008)}]{sontag2008steady}%
  \BibitemOpen
  \bibfield  {author} {\bibinfo {author} {\bibfnamefont {E.}~\bibnamefont
  {Sontag}},\ }\href@noop {} {\bibfield  {journal} {\bibinfo  {journal} {Essays
  in Biochemistry}\ }\textbf {\bibinfo {volume} {45}},\ \bibinfo {pages} {161}
  (\bibinfo {year} {2008})}\BibitemShut {NoStop}%
\bibitem [{\citenamefont {Angulo}\ \emph {et~al.}(2017)\citenamefont {Angulo},
  \citenamefont {Moreno}, \citenamefont {Lippner}, \citenamefont
  {Barab{\'a}si},\ and\ \citenamefont {Liu}}]{angulo2017fundamental}%
  \BibitemOpen
  \bibfield  {author} {\bibinfo {author} {\bibfnamefont {M.~T.}\ \bibnamefont
  {Angulo}}, \bibinfo {author} {\bibfnamefont {J.~A.}\ \bibnamefont {Moreno}},
  \bibinfo {author} {\bibfnamefont {G.}~\bibnamefont {Lippner}}, \bibinfo
  {author} {\bibfnamefont {A.-L.}\ \bibnamefont {Barab{\'a}si}},\ and\ \bibinfo
  {author} {\bibfnamefont {Y.-Y.}\ \bibnamefont {Liu}},\ }\href@noop {}
  {\bibfield  {journal} {\bibinfo  {journal} {Journal of the Royal Society
  Interface}\ }\textbf {\bibinfo {volume} {14}},\ \bibinfo {pages} {20160966}
  (\bibinfo {year} {2017})}\BibitemShut {NoStop}%
\bibitem [{\citenamefont {Peixoto}(2019)}]{peixoto2019community}%
  \BibitemOpen
  \bibfield  {author} {\bibinfo {author} {\bibfnamefont {T.~P.}\ \bibnamefont
  {Peixoto}},\ }\href@noop {} {\bibfield  {journal} {\bibinfo  {journal}
  {Physical review letters}\ }\textbf {\bibinfo {volume} {123}},\ \bibinfo
  {pages} {128301} (\bibinfo {year} {2019})}\BibitemShut {NoStop}%
\bibitem [{\citenamefont {Cook}(1971)}]{cook1971complexity}%
  \BibitemOpen
  \bibfield  {author} {\bibinfo {author} {\bibfnamefont {S.~A.}\ \bibnamefont
  {Cook}},\ }in\ \href@noop {} {\emph {\bibinfo {booktitle} {Proceedings of the
  third annual ACM symposium on Theory of computing}}}\ (\bibinfo {year}
  {1971})\ pp.\ \bibinfo {pages} {151--158}\BibitemShut {NoStop}%
\bibitem [{\citenamefont {Lynce}\ and\ \citenamefont
  {Ouaknine}(2006)}]{lynce2006sudoku}%
  \BibitemOpen
  \bibfield  {author} {\bibinfo {author} {\bibfnamefont {I.}~\bibnamefont
  {Lynce}}\ and\ \bibinfo {author} {\bibfnamefont {J.}~\bibnamefont
  {Ouaknine}},\ }\href@noop {} {\bibfield  {journal} {\bibinfo  {journal}
  {ISAIM}\ }\textbf {\bibinfo {volume} {11}},\ \bibinfo {pages} {6} (\bibinfo
  {year} {2006})}\BibitemShut {NoStop}%
\bibitem [{\citenamefont {Massacci}\ and\ \citenamefont
  {Marraro}(2000)}]{massacci2000cryptanalysis}%
  \BibitemOpen
  \bibfield  {author} {\bibinfo {author} {\bibfnamefont {F.}~\bibnamefont
  {Massacci}}\ and\ \bibinfo {author} {\bibfnamefont {L.}~\bibnamefont
  {Marraro}},\ }\href@noop {} {\bibfield  {journal} {\bibinfo  {journal}
  {Journal of Automated Reasoning}\ }\textbf {\bibinfo {volume} {24}},\
  \bibinfo {pages} {165} (\bibinfo {year} {2000})}\BibitemShut {NoStop}%
\bibitem [{\citenamefont {Bright}\ \emph {et~al.}(2019)\citenamefont {Bright},
  \citenamefont {Gerhard}, \citenamefont {Kotsireas},\ and\ \citenamefont
  {Ganesh}}]{bright2019effectiveSAT}%
  \BibitemOpen
  \bibfield  {author} {\bibinfo {author} {\bibfnamefont {C.}~\bibnamefont
  {Bright}}, \bibinfo {author} {\bibfnamefont {J.}~\bibnamefont {Gerhard}},
  \bibinfo {author} {\bibfnamefont {I.}~\bibnamefont {Kotsireas}},\ and\
  \bibinfo {author} {\bibfnamefont {V.}~\bibnamefont {Ganesh}},\ }in\
  \href@noop {} {\emph {\bibinfo {booktitle} {Maple Conference}}}\ (\bibinfo
  {organization} {Springer},\ \bibinfo {year} {2019})\ pp.\ \bibinfo {pages}
  {205--219}\BibitemShut {NoStop}%
\bibitem [{\citenamefont {Molnár}\ \emph {et~al.}(2018)\citenamefont
  {Molnár}, \citenamefont {Varga}, \citenamefont {Toroczkai},\ and\
  \citenamefont {Ercsey-Ravasz}}]{molnar2018maxsat}%
  \BibitemOpen
  \bibfield  {author} {\bibinfo {author} {\bibfnamefont {B.}~\bibnamefont
  {Molnár}}, \bibinfo {author} {\bibfnamefont {M.}~\bibnamefont {Varga}},
  \bibinfo {author} {\bibfnamefont {Z.}~\bibnamefont {Toroczkai}},\ and\
  \bibinfo {author} {\bibfnamefont {M.}~\bibnamefont {Ercsey-Ravasz}},\ }\href
  {https://doi.org/10.48550/ARXIV.1801.06620} {\bibinfo {title} {A
  high-performance analog max-sat solver and its application to ramsey
  numbers}} (\bibinfo {year} {2018})\BibitemShut {NoStop}%
\bibitem [{\citenamefont {Ercsey-Ravasz}\ and\ \citenamefont
  {Toroczkai}(2011)}]{ercsey2011sat}%
  \BibitemOpen
  \bibfield  {author} {\bibinfo {author} {\bibfnamefont {M.}~\bibnamefont
  {Ercsey-Ravasz}}\ and\ \bibinfo {author} {\bibfnamefont {Z.}~\bibnamefont
  {Toroczkai}},\ }\href@noop {} {\bibfield  {journal} {\bibinfo  {journal}
  {Nature Physics}\ }\textbf {\bibinfo {volume} {7}},\ \bibinfo {pages} {966}
  (\bibinfo {year} {2011})}\BibitemShut {NoStop}%
\bibitem [{\citenamefont {Sorensson}\ and\ \citenamefont
  {Een}(2005)}]{sorensson2005minisat}%
  \BibitemOpen
  \bibfield  {author} {\bibinfo {author} {\bibfnamefont {N.}~\bibnamefont
  {Sorensson}}\ and\ \bibinfo {author} {\bibfnamefont {N.}~\bibnamefont
  {Een}},\ }\href@noop {} {\bibfield  {journal} {\bibinfo  {journal} {SAT}\
  }\textbf {\bibinfo {volume} {2005}},\ \bibinfo {pages} {1} (\bibinfo {year}
  {2005})}\BibitemShut {NoStop}%
\bibitem [{\citenamefont {Biere}(2013)}]{biere2013lingeling}%
  \BibitemOpen
  \bibfield  {author} {\bibinfo {author} {\bibfnamefont {A.}~\bibnamefont
  {Biere}},\ }\href@noop {} {\bibfield  {journal} {\bibinfo  {journal}
  {Proceedings of SAT competition}\ }\textbf {\bibinfo {volume} {2013}},\
  \bibinfo {pages} {1} (\bibinfo {year} {2013})}\BibitemShut {NoStop}%
\bibitem [{\citenamefont {Gravel}\ and\ \citenamefont
  {Elser}(2008)}]{gravel2008divide}%
  \BibitemOpen
  \bibfield  {author} {\bibinfo {author} {\bibfnamefont {S.}~\bibnamefont
  {Gravel}}\ and\ \bibinfo {author} {\bibfnamefont {V.}~\bibnamefont {Elser}},\
  }\href@noop {} {\bibfield  {journal} {\bibinfo  {journal} {Physical Review
  E}\ }\textbf {\bibinfo {volume} {78}},\ \bibinfo {pages} {036706} (\bibinfo
  {year} {2008})}\BibitemShut {NoStop}%
\bibitem [{\citenamefont {Arag{\'o}n~Artacho}\ \emph
  {et~al.}(2020)\citenamefont {Arag{\'o}n~Artacho}, \citenamefont {Campoy},\
  and\ \citenamefont {Tam}}]{aragon2020douglas}%
  \BibitemOpen
  \bibfield  {author} {\bibinfo {author} {\bibfnamefont {F.~J.}\ \bibnamefont
  {Arag{\'o}n~Artacho}}, \bibinfo {author} {\bibfnamefont {R.}~\bibnamefont
  {Campoy}},\ and\ \bibinfo {author} {\bibfnamefont {M.~K.}\ \bibnamefont
  {Tam}},\ }\href@noop {} {\bibfield  {journal} {\bibinfo  {journal}
  {Mathematical Methods of Operations Research}\ }\textbf {\bibinfo {volume}
  {91}},\ \bibinfo {pages} {201} (\bibinfo {year} {2020})}\BibitemShut
  {NoStop}%
\end{thebibliography}%

\end{document}